\newcommand{\ket}[1]{\left|#1\right\rangle}
\newcommand{\bra}[1]{\left\langle#1\right|}
\newcommand{\state}[1]{\ket{#1}\bra{#1}}
\newcommand{\expc}[1]{\left\langle #1 \right\rangle}
\newcommand{\hilb}{\mathcal{H}}
\newcommand{\Tr}{\mathrm{Tr}}
\newcommand{\Expc}{\mathbb{E}}
\newcommand{\dint}[2]{{\displaystyle \int_{#1}^{#2}}}
\newcommand{\lind}{\mathcal{L}}
\newcommand{\wt}[1]{\widetilde{#1}}
\newcommand{\ob}[1]{\overline{#1}}
\newcommand{\quadvar}[1]{\left\llbracket#1\right\rrbracket}
\newtheorem{proposition}{Proposition}
\newtheorem{theorem}{Theorem}
\newtheorem{corollary}{Corollary}
\title{Nonlinear functionals of master equation unravelings}
\begin{document}
\author[1]{Dustin Keys}
\affil[1]{Department of Mechanical Engineering, University of North Texas}
\author[2]{Jan Wehr}
\affil[2]{Department of Mathematics and Program in Applied Mathematics, University of Arizona}
\date{\today}
\maketitle
\begin{abstract}
    Unravelings provide a probabilistic representation of solutions of master equations and a method of computation of the density operator dynamics. The trajectories generated by unravelings may also be treated as real---as in the stochastic collapse models. 
 While averages of linear functionals of the unraveling trajectories can be calculated from the master equation, the situation is different for nonlinear functionals, thanks to the corrections with nonzero expected values, coming from the It\^o formula.  Two types of nonlinear functionals are considered here: variance, and entropy. The corrections are calculated explicitly for two types of unravelings, based on Poisson and Wiener processes.  In the case of entropy, these corrections are shown to be negative, expressing the localization introduced by the Lindblad operators.
  
\end{abstract}
\section{Introduction}
In the study of open quantum systems, unravelings of the master equation dynamics have proven themselves to be a useful mathematical tool. Rather than directly integrating the master equation, which governs the dynamics of the reduced system and its respective density operator, one generates realizations of a  stochastic process for a wavevector---the unraveling--- governed by a suitable stochastic differential equation\cite{MCD93,GP92}, such that the ensemble average of the one-dimensional projector generated by the wave vector solves the master equation.
\par In this framework, the unraveling is merely a useful mathematical tool for integrating the master equation---only the ensemble average of the projector plays a role. However, such stochastic equations are obtained naturally from the description of an interaction of the system with the environment.  One such scenario is:  the environment is initially in a reference state; interaction leads to entanglement of the system and its environment; when a measurement is performed on the latter, this affects the system's state.  The environment is reset to the reference state and the whole process is repeated.  In the limit, in which the duration of one step of the process goes to zero, one can show that the system's state satisfies one of the stochastic equations unraveling the master equation (which one, depends on the environmental observable that is measured). The source of randomness is the result of the measurement, distributed according to the Born rule.  Stochastic equations are also postulated in attempts to resolve the quantum measurement problem by introducing a noise-driven collapse process to account for the objectification of outcomes of measurements. In such stochastic collapse models\cite{GRW86,GPR90}, the ontological status of trajectories becomes important, as collapse happens only for single trajectories and not for the evolution of the density operator which obeys a linear master equation \cite{BG00}. 

A natural question to ask is: can we detect the presence of these trajectories, or are all verifiable predictions accounted for in the master equation?
%It should be stated that stochastic collapse models have not yet been experimentally verified, but if they were true than the presence of trajectories should be one of their most easily verified predictions since we never observe quantum superpositions in day-to-day life but only single outcomes, i.e. single realizations of the collapse process. 
It is thus useful to describe predictions of the quantum trajectories models which go beyond the master equation framework. This work explores the unraveling-specific features of different unravelings, corresponding to the same master equation.  In a recent paper \cite{Petal23}, variance of measurement results associated with quantum trajectories is discussed in experimental context.
\section{GKSL equation and Unravelings}
A full description of the evolution of an open quantum system is given by the joint Schr\"odinger dynamics, $U_t$, on the total Hilbert space $\hilb=\hilb_S\otimes\hilb_E$---the product of the 
system's space, $\hilb_S$, and of the environment's space, $\hilb_E$. The joint state, which is initially a product $\rho(0)=\rho_S\otimes\rho_E$, evolves according to
$$\rho(t)=U_t\rho_S\otimes\rho_EU_t^\dagger,$$
Typically, one is interested primarily in the relative state of the system, which we call $\ob{\rho}(t)$, obtained by taking the trace over the degrees of freedom associated with the environment:
$$\ob{\rho}(t)=\Tr_E\left[U_t\rho_S\otimes\rho_EU_t^\dagger\right].$$
Defined this way, the system's state does not have any intrinsically defined dynamics.  That is, $\ob{\rho}(t)$ at positive times is not determined by $\ob{\rho}(0)$.
To derive an approximate dynamics governing the evolution of $\ob{\rho}(t)$, the {\it Born-Markov approximation} is used. (see e.g. \cite{Sch07,BP02,HR06}). This approximation relies on two assumptions: 
that the system-environment interaction is sufficiently small for the joint state to remain close to a product state (Born approximation), and that the correlation time of the environment is small 
compared to characteristic time scales of the system's evolution. Under the Born-Markov approximation, the reduced density operator $\ob{\rho}(t)$ evolves according to a semigroup $T_t$,
$$\ob{\rho}(t)=T_t[\rho_S],$$
Its infinitesimal generator is the superoperator $\lind[\rho]$
$$\dfrac{d}{dt}T_t[\rho]\big|_{t=0}=\lind[\rho],$$
called the Lindbladian. The term ``superoperator'' refers to the fact that the object evolving under the semigroup $T_t$ is an operator itself---the density operator of the system.
It has been shown by Gorini, Kossakowski and Sudarshan \cite{GKS76}, and by Lindblad \cite{Lind76}, that the general form of this generator is
\begin{equation} 
    \lind[\rho]=-i[H',\rho]-\frac{1}{2}\sum_i\left(L_i^\dagger L_i\rho+\rho L_i^\dagger L_i-2L_i\rho L_i^\dagger\right)
\end{equation}
Here  $H'$ is a self-adjoint operator which is a modified system Hamiltonian.  This modification (called ``relaxation''), results in a change in the energy levels due to the interaction 
with the environment---described as the ``Lamb shift'', by extension of the case of an atom interacting with the electromagnetic field. The $L_i$'s are called Lindblad operators.
In the case when the system's Hilbert space is infinite-dimensional, there may be infinitely many of them and then the sum $\sum_i L_i^\dagger L_i$ must converge to a bounded operator.
In this paper, we restrict our attention to the case when $\hilb_S$ has finite dimension and the number of the Lindblad operators $L_i$ is finite.
The evolution of $\ob{\rho}(t)$ is then described by an ordinary differential equation called the GKSL equation, after those who originally described it,
$$\dfrac{d}{dt}\ob{\rho}(t)=\lind\left[\ob{\rho}(t)\right],$$
also known simply as the master equation.  
\par
Solving the master equation directly amounts to solving a system of linear ordinary differential equations (ODE).  The number of these equations is of order $N^2$, where $N$ is the dimension of the system's Hilbert space.  For large $N$ this presents computational difficulties and a Monte Carlo approach called {\it unraveling} is preferred.   An unraveling is  a stochastic process $\ket{\psi_t(M)}$ valued in $\hilb_S$, with the property that the expected value of the one-dimensional projection operator
$$\ob{\rho}(t)=\Expc \state{\psi_t(M)}$$
satsifies the GKSL equation.  The known unravelings are solutions of stochastic differential equations (SDE), driven by a (multidimensional) noise process $M$---usually a Wiener process $W$
or a Poisson process $N$---more details are presented below.  Using an unraveling replaces integration of $N^2$ ODE by solving an $N$-dimensional SDE enough times to obtain a good 
approximation to the expected value representing the solution of the master equation.  
Two unravelings of the GKSL equation are well known and widely used.  One is the solution of the Gisin-Percival SDE \cite{GP92}, driven by a complex Wiener process $W(t)$ with components 
$W_i(t)$, whose number is equal to the number of the Lindblad operators, reads
\begin{align}
    d\ket{\psi_t}=-iH\ket{\psi_t}dt+&\sum_i\left( \expc{L_i^\dagger}_tL_i-\frac{1}{2}L^\dagger_i L_i-\frac{1}{2}|\expc{L_i}_t|^2
    \right)\ket{\psi_t}dt\nonumber\\
    &+\frac{1}{\sqrt{2}}\sum_i \left(L_i-\expc{L_i}_t\right)\ket{\psi_t}dW_i(t)
    \label{GPeqn}
\end{align}
 Here we've abbreviated $\expc{\psi_t|\cdot|\psi_t}$ as $\expc{\cdot}_t$. Another unravelling, known as the Piecewise Determinstic process (PDP) (see e.g. \cite{BP02}) is a solution of the SDE
\begin{align}
    d\ket{\psi_t}=-\bigg(iH&+\frac{1}{2}\sum_i L_i^\dagger L_i-\expc{L_i^\dagger L_i}_{t-}\bigg)\ket{\psi_{t-}}dt\nonumber\\
    &+\sum_i \left(\dfrac{L_i}{\langle L_i^\dagger L_i\rangle_{t-}^{1/2}}-I\right)\ket{\psi_{t-}}dN_i(t),
    \label{PDP}
\end{align}
where each $N_i(t)$ is an inhomogeneous Poisson process, satisfying
$$\Expc dN_i(t)=\Expc \expc{L^\dagger_i L_i}_tdt.$$
  Note that, as the Poisson process $N$ is discontinuous, so are the realizations of the PDP process which are solutions of the 
SDE driven by $N$. Moreover, these realizations have jump discontinuities at the same values of $t$, where $N$ jumps.  It is thus essential to make clear whether the realizations of $N$ and of the 
PDP process are continuous from the left, or from the right (different conventions are adopted by different authors) and to take left or right limits of the integrands where appropriate.
We assume that the Poisson process $N(t)$ is continuous on the right, and an integral with respect to $N(t)$ is defined as going up to time $t$, {\it including} $t$, that is, 
$\int_0^t f(t)\,dN(t)$ is interpreted as the integral over the interval $(0,t]$. 
On the other hand, since the realizations of the Wiener process and of the solutions of the Gisin-Percival equation, are continuous functions of time, their values at $t$ and their limits as 
$s \to  t-$ coincide, so no additional conventions are necessary.   
\par
Up to now we have only presented unravelings as a tool to integrate the GKSL equation, but there are a class of master
equations which are used in stochastic collapse models where the status of single trajectories of an unraveling---real or fictitious---becomes important. If we assume that 
quantum systems are described by single trajectories, $\ket{\psi_t(\omega)}$ where $\omega$ is a realization of a stochastic process, rather than a linear 
Schr\"odinger evolution, or evolution under a master equation, then these trajectories will have features particular to their status as solutions to stochastic differential equations. 
Expected values of nonlinear functionals of $\ket{\psi_t}$ depend on the choice of an unraveling in general. If such quantities are observable, their experimental values would help choose the unraveling which is closer to physical reality. 

In the sequel, we investigate two types of nonlinear functionals. The first is the square of the probability for obtaining a certain eigenvalue of an observable, which is enough to determine the ensemble variance in the outcome of any measurement. The second is a type of quantum entropy and is physically important in the context of thermodynamics and an illustrative quantification of localization effects in stochastic collapse theories.
%%%%%%%%%%%%%%%%%%%%%%%%%%%%%%%%%%%%%%%%%%%%%%%%%%%%%%%%%%%%%%%%%%%%%%%%%%%%%%%%%%%%%%%%%%%%%%%%%%%%
\section{Essential Elementary Stochastic Calculus}
A stochastic process (see e.g. \cite{Prot04}) is  a set of random variables $\left\{M_t\right\}$, indexed over time $t\in[0,T]$, which is adapted to a certain filtration $\left\{\mathcal{F}_t\right\}$ describing the set of events of the process which can be determined up to time $t$.  
The realizations of a stochastic process may be continuous as functions of $t$, as in the case of the Wiener process, or discontinuous, as in the case of the Poisson process. Stochastic calculus, which describes integration and differentiation of these processes, differs from classical calculus in important ways. Typically, a process does not have finite variation, so the usual Lebesgue-Stieltjes theory of integration does not apply. Integration against a stochastic processes is first defined on simple-processes which are piecewise constant in time, and then extended to more general processes, called semimartingales.  A semimartingale is the sum of a finite variation process and a martingale such as the Wiener process. A stochastic differential of a process is then defined, inverting the relation between a differential and its integral. This results in a calculus with a modified chain rule, described by It\^o's formula.
\begin{theorem}[It\^o's Formula]
	Let $f:\mathbb{R}^n\rightarrow\mathbb{R}$ be twice continuously differentiable and let $X_t=(X_t^1,\ldots,X_t^n)$ be an n-tuple of semimartingales. Then $f(X_t)$ is a semimartingale with
	\begin{align*}
		f(X_t)-f(X_0)=&\sum_{i=1}^n\int_{0+}\dfrac{\partial f}{\partial x_i}(X_{s-})dX^i_s+\frac{1}{2}\sum_{1\le i,j\le n}^n\int_{0+}\dfrac{\partial^2 f}{\partial x_i\partial x_j}(X_{s-})d\quadvar{X^i,X^j}
		^c_s\\+&\sum_{0\le s\le t}\left(f(X_s)-f(X_{s-})-\sum_{i=1}^n\dfrac{\partial f}{\partial x_i}(X_{s-})\Delta X^i_s\right)
	\end{align*}
	where $\quadvar{X,X}^c_t$ is the continuous part of the quadratic variation process and $\Delta X_s=X_s-X_{s-}$ is the jump part of the semimartingale.
	\label{Itoform}
\end{theorem}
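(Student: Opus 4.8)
The plan is to follow the classical route, building the formula from the integration-by-parts identity for semimartingales and then bootstrapping from polynomials to general $C^2$ functions by approximation. The precursor result I would take as the starting point is the product rule $X_tY_t-X_0Y_0=\int_{0+}X_{s-}\,dY_s+\int_{0+}Y_{s-}\,dX_s+\quadvar{X,Y}_t$, which is essentially the definition of the quadratic covariation, together with its decomposition $\quadvar{X,Y}_t=\quadvar{X,Y}^c_t+\sum_{0<s\le t}\Delta X_s\,\Delta Y_s$. Taken together these already constitute It\^o's formula for the bilinear map $f(x,y)=xy$, which is the seed of the induction.

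First I would establish the formula for polynomials. In one dimension this is an induction on the degree: assuming the identity holds for $f$, apply the product rule to $f(X_t)$ and $X_t$ to deduce it for $x\mapsto xf(x)$, and combine with linearity; the base cases $f\equiv 1$ and $f(x)=x$ are immediate. In the $n$-dimensional setting the same scheme works, applying the product rule to pairs consisting of a coordinate $X^i_t$ and a monomial in the $X^j_t$; the cross terms of the product rule generate exactly the mixed second-derivative contributions and the corresponding $\quadvar{X^i,X^j}$ terms. This yields the stated identity for every polynomial $f:\mathbb{R}^n\to\mathbb{R}$.

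Next comes the extension to an arbitrary $f\in C^2$. Since c\`adl\`ag paths are bounded on compact time intervals, for fixed $t$ the path $\{X_s:0\le s\le t\}$ lies in a random compact set, so by a localization argument, stopping $X$ at $T_n=\inf\{s:|X_s|\ge n\}$ and letting $n\to\infty$, it suffices to treat the case in which $X$ is valued in a fixed compact $K$. On $K$, by the Stone--Weierstrass theorem I would choose polynomials $p_k$ with $p_k\to f$, $\nabla p_k\to\nabla f$ and $D^2p_k\to D^2f$ uniformly, write the already-proved formula for each $p_k$, and pass to the limit term by term: the first-order stochastic integral converges by the dominated convergence theorem for stochastic integrals; the $\quadvar{X^i,X^j}^c$ integral converges by ordinary dominated convergence, as $\quadvar{X^i,X^j}^c$ has finite variation; and the jump sum converges by dominated convergence for series, using the Taylor bound $|f(X_s)-f(X_{s-})-\nabla f(X_{s-})\cdot\Delta X_s|\le\tfrac12\|D^2f\|_{\infty,K}\,|\Delta X_s|^2$ together with $\sum_{0<s\le t}|\Delta X_s|^2<\infty$. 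Each limit reproduces the corresponding term for $f$, completing the argument.

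The main obstacle is this passage to the limit, and within it the two genuinely stochastic points: controlling the first-order term requires the $L^2$ (or u.c.p.) continuity of the map $H\mapsto\int H\,dX$ rather than any pathwise estimate, and the localization must be made rigorous in the presence of jumps, since the stopped process $X^{T_n}$ can still jump outside the ball of radius $n$ at the single time $T_n$, so one must check that this jump is accounted for correctly in the sum and does not disturb the limit. By comparison, the purely analytic ingredients, namely the Taylor estimate and the two dominated-convergence arguments for the finite-variation and summable quantities, are routine.
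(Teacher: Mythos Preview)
Your sketch is a faithful outline of the standard textbook proof (essentially the route in Protter, which the paper cites as \cite{Prot04}): integration by parts seeds an induction over polynomials, then Stone--Weierstrass plus localization and dominated convergence extend to $C^2$. The technical caveats you flag---the u.c.p.\ continuity of the stochastic integral and the jump of the stopped process at $T_n$---are indeed the places where care is needed.

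However, there is nothing to compare against: the paper does \emph{not} prove Theorem~\ref{Itoform}. It is stated without proof as a classical result from stochastic calculus, and the paper merely invokes it (and its corollary, the product rule) as a tool for the computations in Sections~4 and~5. So while your proposal is a correct proof strategy, it goes well beyond what the paper itself provides.
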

Here the quadratic variation is defined as 
$$\quadvar{X,Y}_t=X_tY_t - \dint{0}{t}X_{s-}dY_s-\dint{0}{t}Y_{s-}dX_s.$$
It\^o's formula tells us how to calculate a function of an arbitrary number of semimartingales, but a special case is often enough.
\begin{corollary}[It\^o's rule]
	The product of two semimartingales $X_t$ and $Y_t$ can be expressed in differential form
	$$d(X_tY_t)=(dX_t)Y_t+X_tdY_t+(dX_t)(dY_t),$$
	where the products of the differentials is calculated using the It\^o rules:
	\begin{align*}
		dW_i(t) dW_j(t)&=dW_i^*(t)dW_j^*(t)=0,\\
		dW_i(t)dt &= 0\\
		dW_i^*(t)dW_j(t)&=2\delta_{ij}dt,
		\label{witorule}
	\end{align*}
	for the Wiener case and
	\begin{align*}
		dN_i(t)dt &= 0,\\
		dN_i(t) dN_j(t)&=\delta_{ij}dN_i(t)
	\end{align*}
	for the Poisson case.
	\label{Itorule}
\end{corollary}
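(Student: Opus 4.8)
The plan is to read off the product rule as a special case of It\^o's formula (Theorem~\ref{Itoform}), or, equivalently, directly from the definition of quadratic variation recorded just after it. Taking $f(x,y)=xy$ on $\rr^2$ (and applying this componentwise to the real and imaginary parts, expanded bilinearly, when the semimartingales are complex-valued) one has $\partial f/\partial x=y$, $\partial f/\partial y=x$, mixed second partial $1$, pure second partials $0$. Substituting into Theorem~\ref{Itoform} gives $X_tY_t-X_0Y_0=\int_{0+}Y_{s-}\,dX_s+\int_{0+}X_{s-}\,dY_s+\quadvar{X,Y}^c_t+\sum_{0\le s\le t}\bigl(X_sY_s-X_{s-}Y_{s-}-Y_{s-}\Delta X_s-X_{s-}\Delta Y_s\bigr)$. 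Using $X_sY_s-X_{s-}Y_{s-}=Y_{s-}\Delta X_s+X_{s-}\Delta Y_s+\Delta X_s\Delta Y_s$, each summand telescopes to $\Delta X_s\Delta Y_s$, so the jump term is $\sum_{0\le s\le t}\Delta X_s\Delta Y_s$, and together with $\quadvar{X,Y}^c_t$ it reconstitutes the full covariation $\quadvar{X,Y}_t$. Declaring $\int_0^t(dX_s)(dY_s):=\quadvar{X,Y}_t$, the displayed identity is exactly the integrated form of $d(X_tY_t)=(dX_t)Y_t+X_t\,dY_t+(dX_t)(dY_t)$ (the distinction between $Y_t$ and $Y_{t-}$ at jump times being absorbed into the last term), which is the asserted differential product rule.

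It then remains to justify that the formal ``multiplication table'' for $dt$, $dW_i$, $dW_i^*$, $dN_i$ correctly computes $(dX_t)(dY_t)$ for the semimartingales actually occurring in \eqref{GPeqn} and \eqref{PDP}. For this I would invoke bilinearity of $(M,M')\mapsto\quadvar{M,M'}$ together with the associativity of quadratic covariation with stochastic integration, $\quadvar{\int H\,dM,\int K\,dM'}_t=\int_0^t H_sK_s\,d\quadvar{M,M'}_s$: since $X$ and $Y$ are, by \eqref{GPeqn}--\eqref{PDP}, finite sums of integrals of (predictable) processes against the elementary drivers $t,W_i,W_i^*,N_i$, the object $d\quadvar{X,Y}_t$ is obtained by expanding $(dX_t)(dY_t)$ multilinearly, pulling out the coefficient processes as scalars, and replacing each $(dM)(dM')$ by $d\quadvar{M,M'}_t$. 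So the proof reduces to computing $\quadvar{M,M'}$ for every pair of elementary drivers.

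Those quadratic variations are standard. The clock $t$ has continuous paths of finite variation, so $\quadvar{t,M}\equiv 0$ for any semimartingale $M$; this yields $dW_i\,dt=dN_i\,dt=0$. Writing each complex Wiener component as $W_i=B_i^{(1)}+iB_i^{(2)}$ with $B_i^{(1)},B_i^{(2)}$ independent standard real Brownian motions (the normalization that reproduces $dW_i^*dW_i=2\,dt$), the classical facts $\quadvar{B,B}_t=t$ and vanishing covariation of independent Brownian motions give, by bilinear expansion, $\quadvar{W_i,W_j}_t=\quadvar{W_i^*,W_j^*}_t=0$ and $\quadvar{W_i^*,W_j}_t=2\delta_{ij}t$, i.e. the Wiener rules. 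For the Poisson drivers the paths are pure jump with unit jumps, so $\quadvar{N_i,N_i}_t=\sum_{0\le s\le t}(\Delta N_{i,s})^2=\sum_{0\le s\le t}\Delta N_{i,s}=N_i(t)$ since $\Delta N_{i,s}\in\{0,1\}$, while for $i\ne j$ independence of $N_i$ and $N_j$ forces them to have no common jump times almost surely, whence $\quadvar{N_i,N_j}_t=0$; this gives $dN_i\,dN_j=\delta_{ij}\,dN_i$.

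The genuinely non-routine points — which I would flag as the main obstacle — are (i) making sure the informal ``$(dX)(dY)$'' bookkeeping is \emph{derived} from It\^o's formula (via the telescoped jump sum plus continuous covariation equalling $\quadvar{X,Y}_t$, and via the integration–covariation associativity) rather than merely postulated, and (ii) the ``no simultaneous jumps'' statement for independent Poisson processes, which needs a short measure-theoretic argument — for instance computing $\Expc\sum_{0\le s\le t}\Delta N_{i,s}\Delta N_{j,s}$ through the compensators $\Expc\expc{L_i^\dagger L_i}_s\,ds$ and $\Expc\expc{L_j^\dagger L_j}_s\,ds$ and observing it vanishes, or noting that the jump-time sets of independent point processes intersect in a set of zero expected cardinality. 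Everything else is substitution and bilinear algebra.
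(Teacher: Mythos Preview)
Your derivation is correct and matches what the paper intends, but note that the paper does not actually prove this corollary: it merely states it and adds the one-sentence remark that the It\^o correction comes from the quadratic variation, with $dX\,dY$ serving as shorthand for $d\quadvar{X,Y}$. Your argument via $f(x,y)=xy$ in Theorem~\ref{Itoform}, followed by computing $\quadvar{M,M'}$ for each pair of elementary drivers, is the standard way to extract the product rule and multiplication table, and is exactly what the paper's remark gestures at without spelling out.
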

The so called `It\^o correction' to the product rule comes from the quadratic variation, where we write the heuristic $dXdY$ for $d\quadvar{X,Y}$.
This rule is sufficient to show that the Gisin-Percival equation and the PDP are unravellings by taking the expected value of
$$d\state{\psi_t}=\left(d\ket{\psi_t}\right)\bra{\psi_t}+\ket{\psi_t}\left(d\bra{\psi_t}\right)+\left(d\ket{\psi_t}\right)\left(d\bra{\psi_t}\right).$$
The It\^o correction is also the source of the discrepancy between evolution described by a master equation and its (different) unravelings. The simplest case where this difference can be seen is the variance.
\section{Variance}
Let an observable $A$ have spectral decomposition $A_i=\sum_i \lambda_iP_i$, and let $p_i=\expc{P_i}$ be the probability of obtaining the eigenvalue $\lambda_i$, so that the mean of $A$ equals $\expc{A\,}=\sum_i\lambda_ip_i$.
In an unravelling, this probability becomes stochastic, but it does so in such a way that the ensemble average of $p_i$ coincides with the value predicted in the standard quantum mechanics of open systems (of which quantum mechanics is a special case). The square of the expected value of an observable is then
$$\expc{A\,}^2=\sum_{ik} \lambda_i\lambda_k p_ip_k.$$
Using these rules, we can show that the evolution of probabilities under the Gisin-Percival equation is given by
\begin{equation}
	dp_i(t)=\expc{\lind^\dagger\left[P_i\right]}_tdt+\dfrac{1}{\sqrt{2}}\left[\sum_j\expc{P_i\left(L_j-\expc{L_j}\right)}_tdW_j(t) +h.c.\right],
	\label{pw}
\end{equation}
where $h.c.$ stands for Hermitian conjugate, and for the PDP the corresponding equation is
\begin{align}
	dp_i(t)&=\expc{\lind^\dagger\left[P_i\right]}_tdt+\sum_j\left( p_i\expc{L_j^\dagger L_j}_t-\expc{L_j^\dagger P_i L_j}_t\right)dt\nonumber\\
	&+\sum\left(\dfrac{\expc{L_j^\dagger P_i L_j}_{t-}}{\expc{L_j^\dagger L_j}_{t-}}-p_i(t-)\right)dN_j(t).
	\label{pn}
\end{align}
Here $\lind^\dagger$ is the adjoint of the Lindbladian, which is defined as the (unique) superoperator satisfying
$$\Tr\left[A^\dagger\lind[B]\right]=\Tr\left[\lind^\dagger[A^\dagger] B\right],$$
for any two operators $A$ and $B$. In terms of the Lindblad operators, it takes the form
$$\lind^\dagger[A]=i[H,A]-\dfrac{1}{2}\sum_iL_i^\dagger L_iA+AL_i^\dagger L_i-2L_i^\dagger A L_i$$
The Lindbladian evolution describes an analog of Schr\"odinger evolution for the density operator.  Hence the adjoint of the Lindbladian generates an analog of the Heisenberg evolution, describing how observables (in this case $P_i$) evolve.
Note that both equations \ref{pw} and \ref{pn} consist of the usual (adjoint) Lindbladian evolution plus a martingale term which has expected value zero. This is manifest in the case of equation
\ref{pw}; to exhibit this structure in the case of equation \ref{pn} we make use of the compensated Poisson process, $\wt{N}$, which is a martingale defined by subtracting from the Poisson process its expected
value. In this case the compensated Poisson process has the differential
$$d\wt{N}(t)=dN(t)-\expc{L_j^\dagger L_j}_tdt$$
and we can rewrite equation \ref{pn} as
\begin{align}
	dp_i(t)&=\expc{\lind^\dagger\left[P_i\right]}_tdt+\sum_j\left(\dfrac{\expc{L_j^\dagger P_i L_j}_{t-}}{\expc{L_j^\dagger L_j}_{t-}}-p_i(t-)\right)d\wt{N}_j(t).
	\label{pn}
\end{align}
The important consequence is that, as the martingales have expected value zero, on average the probabilities 
are what we would expect from the Lindbladian evolution. However, these martingales lead to nontrivial contributions to the second moment and the entropy functional because of Theorem \ref{Itoform}.  After taking the expected value, these corrections to the Lindbladian evolution do not disappear.
\par
To calculate the expected value of the second moment we apply the It\^o rule to the product $p_ip_k$. We find in the Wiener case that
\begin{align}
	d(p_ip_k) &= \left[\expc{\lind[P_i]}_tdt+\dfrac{1}{\sqrt{2}}\left(\sum_j\expc{P_i\left(L_j-\expc{L_j}\right)}_tdW_j+\mbox{h.c.}\right)\right]p_k\\\nonumber
	&+ p_i\left[\expc{\lind[P_k]}_tdt+\dfrac{1}{\sqrt{2}}\left(\sum_j\expc{P_k\left(L_j-\expc{L_j}\right)}_tdW_j+\mbox{h.c.}\right)\right]\\\nonumber
	& +\left[\sum_j\expc{P_i\left(L_j-\expc{L_j}_t\right)}_t\expc{\left(L^\dagger-\expc{L_j^\dagger}_t\right)P_k}_t+\right.\\\nonumber
	&+\left.\sum_j\expc{\left(L_j^\dagger-\expc{L_j^\dagger}_t\right)P_i}_t\expc{P_k\left(L_j-\expc{L_j}_t\right)}_t\right]dt.
\end{align}
If we take the ensemble average, the martingale terms disappear and we are left with
\begin{align}
	d\Expc p_ip_k &= \left[\Expc p_i\expc{\lind[P_k]}_t+p_k\expc{\lind[P_i]}_t\right]dt\\\nonumber
	& +\left[\sum_j\Expc\expc{P_i\left(L_j-\expc{L_j}\right)}_t\expc{\left(L^\dagger-\expc{L_j^\dagger}_t\right)P_k}_t+\right.\\\nonumber
	&+\left.\sum_j\Expc\expc{\left(L_j^\dagger-\expc{L_j^\dagger}_t\right)P_i}_t\expc{P_k\left(L_j-\expc{L_j}_t\right)}_t\right]dt.
\end{align}
exhibiting the corrections arising from the It\^o rule. The analogous calculation for the
Poisson case yields
\begin{align}
	d(p_ip_k) &= \left(\expc{\lind^\dagger\left[P_i\right]}_tdt+\sum_j\left(\dfrac{\expc{L_j^\dagger P_i L_j}_{t-}}{\expc{L_j^\dagger L_j}_{t-}}-p_i(t-)\right)d\wt{N}_j(t)\right)p_k+\\\nonumber
	& + p_i\left(\expc{\lind^\dagger\left[P_k\right]}_tdt+\sum_j\left(\dfrac{\expc{L_j^\dagger P_k L_j}_{t-}}{\expc{L_j^\dagger L_j}_{t-}}-p_k(t-)\right)d\wt{N}_j(t)\right)\\\nonumber
	& + \sum_j\left(\dfrac{\expc{L_j^\dagger P_i L_j}_{t-}}{\expc{L_j^\dagger L_j}_{t-}}-p_i(t-)\right)\left(
	\dfrac{\expc{L_j^\dagger P_k L_j}_{t-}}{\expc{L_j^\dagger L_j}_{t-}}-p_k(t-)\right)dN_i,
\end{align}
and taking the ensemble average we obtain
\begin{align}
	d\Expc p_ip_k & = \Expc\left[p_k\expc{\lind[P_i]}_t+p_i\expc{\lind[P_k]}_t\right]dt\\
	& + \Expc\left[\sum_j\left(\dfrac{\expc{L_j^\dagger P_i L_j}_t}{\expc{L_j^\dagger L_j}_t}-p_i(t)\right)\left(\dfrac{\expc{L_j^\dagger P_k L_j}_t}{\expc{L_j^\dagger L_j}_t}-p_k(t)\right)\expc{L_j^\dagger L_j}_t\right]dt.
\end{align}
Again, there is an It\^o correction. Crucially \textit{it is different from the Wiener case}. The second moment would manifest as an intrinsic dispersion (as opposed to the dispersion due to laboratory effects) of measurement values. The two unravellings thus give rise to two different variances for the value of an observable. See \cite{Petal23} for a detailed analysis of the variances for a two-level atom, including experimental relevance.
\par
It has been claimed that, if there existed a way of calculating this variance from experimental data, it would introduce the possibility for signaling faster than the speed of light \cite{Gisin89,BH15}. According to this point of view, it may not be physically possible to measure these moments.  On the other hand, in the stochastic collapse model where the trajectories are real and unravel a master equation, these modified moments are an unavoidable consequence.  Since superluminal signaling in this scenario requires the ability to influence which unravelling is being used, the problem is resolved by assuming that, in a fundamental sense, only one unravelling is possible and can be experimentally distinguished from its alternative. In this scenario, the Poisson unravelling could be considered as the more basic unravelling with the Wiener unravelling as its limiting case \cite{BB91}. The appearance of two unravellings is due entirely to the choice of measurement apparatus and no superluminal signaling is possible.

\section{Entropy in Unravelings}

In accordance with the measurement postulate of quantum mechanics (see e.g. \cite{NielChua00}), a measurement is a collection of operators, $\{M_i\}$ satisfying a 
completeness relation

$$\sum_i M_i^\dagger M_i=I.$$
The operator $M_i$ corresponds to a measurement result of type $i$, occurring with probability 
$$p_i=\expc{M_i^\dagger M_i}_t=\expc{P_i}_t,$$
where $P_i=M_i^\dagger M_i.$
Given an unraveling, we may ask: how does the entropy of this probability distribution change in time? The entropy depends on the
probabilities $p_i$ in the usual way
$$S=-\sum_i p_i\log p_i,$$
In the context of an unraveling, the $p_i$ are stochastic processes, as described above. Because entropy is a nonlinear function of the $p_i$, It\^o corrections will appear. 
\par
To see this we will first calculate the It\^o differential of $\log p_i$. Using Theorem \ref{Itoform}, we obtain
\begin{align*}
d\log p_i(t)&=\dfrac{1}{p_i(t)}\left[\expc{\lind^\dagger\left[P_i\right]}_tdt+\frac{1}{\sqrt{2}}\left(\sum_j\expc{P_i\left(L_j-\expc{L_j}\right)}_tdW_j(t)+h.c\right)\right]\\
&-\frac{1}{p_i^2(t)}\sum_j\bigg|\expc{P_i\left(L_j-\expc{L_j}\right)}_t\bigg|^2dt
\end{align*}
in the Wiener case, and
\begin{align*}
d\log p_i(t)&=\dfrac{1}{p_i(t)}\left[\expc{\lind^\dagger\left[P_i\right]}_tdt-\sum_j\left(\expc{L_j^\dagger P_iL_j}_t-p_i(t)\expc{L_j^\dagger L_j}_t\right)dt\right]\\
&+\sum_j\left[\log\left(\dfrac{\expc{L_j^\dagger P_iL_j}_{t-}}{\expc{L_j^\dagger L_j}_{t-}}\right)-\log p_i(t-)\right]dN_j(t)
\end{align*}
in the Poisson case.
We may then calculate the stochastic differential of the entropy functional by applying Corollary \ref{Itorule} to $p_i$ and $\log p_i$. The results are
\begin{align}
dS(t) &= -\sum_i\expc{\lind^\dagger[P_i]}_t\log p_i(t)dt-\dfrac{1}{\sqrt{2}}\sum_{ij}\left[\expc{P_i\left(L_j-\expc{L_j}\right)}_t\log p_i(t) dW_j(t) +h.c.\right]\nonumber\\
&-\sum_i\dfrac{1}{p_i(t)}\sum_j\bigg|\expc{P_i\left(L_j-\expc{L_j}\right)}_t\bigg|^2dt,
\label{sw}
\end{align}
for the Wiener case and
\begin{align}
dS(t) &=-\sum_i\expc{\lind^\dagger[P_i]}_t\log p_i(t)dt-\sum_{ij}\Biggl[\dfrac{\expc{L_j^\dagger P_iL_j}_{t-}}{\expc{L_j^\dagger L_j}_{t-}}\log\left(\dfrac{\expc{L_j^\dagger P_iL_j}_{t-}}{\expc{L_j^\dagger L_j}_{t-}}\right)\nonumber \\
&-p_i(t-)\log p_i(t-)\Biggr]d\wt{N}_j(t)
-\sum_{ij}\expc{L_j^\dagger P_iL_j}_t\log\left(\dfrac{\expc{L_j^\dagger P_iL_j}_t}{\expc{L_j^\dagger L_j}_tp_i(t)}\right)dt
\label{sn}
\end{align}
for the Poisson case.
Here, simplifications have been made by using the completeness relation, $\sum_iP_i=I$, and the fact that $\lind^\dagger[I]=0$, which is a consequence of the preservation of trace by the Lindbladian evolution.
We see that there are non-martingale corrections to the entropy, and so making use of the unravellings themselves---viewing them as real trajectories instead of just using the expectation to
unravel the Lindblad equation---has real consequences for the entropy.  The same is true of other nonlinear functionals of quantum trajectories.
\par
It is important to point out that this is by no means the only entropy that can be defined. There is of course the von Neumann entropy, $S^{vN}(t)$, which is defined as
\begin{equation}
	S^{vN}(t)=-\sum\Tr[\overline{\rho}(t)\log\overline{\rho}(t)].
\end{equation} 
A closely related quantity, is the average entropy associated with the measurement of an observable $A$, which evolves according to the adjoint Lindbladian evolution:
$$S^A(t)=-\sum_i\left(\Expc p_i(t)\right)\log\left(\Expc p_i(t)\right),$$
We compare $S^A$ to the mean of $S$ in the next section.
%%%%%%%%%%%%%%%%%%%%%%%%%%%%%%%%%%%%%%%%%%%%%%%%%%%%%%%%%%%%%%%%%%%%%%%%%%%%%%%%%%%%%%%%%%%%%%%%%%%%
\section{Non-martingale Corrections to the Entropy}
The average entropy for an arbitrary unravelling is smaller than the measurement entropy, $S^A$, obtained from the master equation. To see this we simply apply Jensen's inequality to the probabilties with convex function $\phi(\xi) = \xi\log \xi$ to
get that $\Expc\phi(p_i)\ge \phi(\Expc p_i)$ and so $\Expc S_t\le S^{A}$.
$S^{A}$ obeys a deterministic evolution under the master equation
$$dS^{A} = \sum_i\Expc\left[\expc{\lind^\dagger[P_i]}_t\right]\log\left( \Expc p_i(t)\right)dt$$
The expected value of the stochastic process $S$ differs from the deterministic evolution of $S^{A}$ in three crucial ways. The first is that it is possible to have correlations between $\expc{\lind^\dagger[P_i]}$ and $\log p_i$ so that in general the expected value of their product is not equal to the product of their expected values. The second difference is that there are martingale corrections which average to zero but contribute to the dynamics of single realizations. The third are the It\^o  corrections which can be shown to be manifestly negative. In the Wiener case, this follows from inspection. We show that the correction is negative for the Poisson case below.
\begin{proposition}
Let $f_t=\sum_{ij}\expc{L_j^\dagger P_iL_j}_{t}\log\left(\dfrac{\expc{L_j^\dagger P_i L_j}_t}{\expc{L_j^\dagger L_j}_t\expc{P_i}}_t\right)$ and suppose 
$\sum_j L_j^\dagger L_j=B$. Then $f_t\ge 0$.
\end{proposition}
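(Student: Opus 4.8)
The plan is to recognize $f_t$ as a (classical) relative entropy between two nonnegative arrays having the same total mass, and then conclude by the log-sum inequality (equivalently, nonnegativity of relative entropy for unnormalized measures of equal mass). Fix $t$ and abbreviate $\expc{\cdot}_t$ by $\expc{\cdot}$. Introduce the nonnegative quantities $q_{ij}=\expc{L_j^\dagger P_i L_j}$ (nonnegative because $P_i\ge 0$ forces $L_j^\dagger P_i L_j\ge 0$), $r_j=\expc{L_j^\dagger L_j}$, and $p_i=\expc{P_i}$, so that $f_t=\sum_{ij}q_{ij}\log\!\big(q_{ij}/(r_j p_i)\big)$. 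The two structural inputs I would use are the completeness relation $\sum_i P_i=I$, which gives $\sum_i q_{ij}=\expc{L_j^\dagger L_j}=r_j$ and $\sum_i p_i=\expc{I}=1$, together with the hypothesis $B=\sum_j L_j^\dagger L_j$, which gives $\sum_j r_j=\expc{B}$.

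Next I would apply the log-sum inequality to the doubly-indexed families $a_{ij}=q_{ij}$ and $b_{ij}=r_j p_i$, first discarding the degenerate terms by the usual convention $0\log 0=0$: if $r_j=0$ then $L_j\ket{\psi_t}=0$, hence $q_{ij}=0$ for every $i$, so such an index $j$ contributes nothing to $f_t$; the remaining terms have $b_{ij}>0$. This yields
\[
f_t=\sum_{ij}q_{ij}\log\frac{q_{ij}}{r_j p_i}\;\ge\;\Big(\sum_{ij}q_{ij}\Big)\log\frac{\sum_{ij}q_{ij}}{\sum_{ij}r_j p_i}.
\]
Now $\sum_{ij}q_{ij}=\sum_j r_j=\expc{B}$, while $\sum_{ij}r_j p_i=\big(\sum_j r_j\big)\big(\sum_i p_i\big)=\expc{B}\cdot 1=\expc{B}$, so the right-hand side equals $\expc{B}\log 1=0$. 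Hence $f_t\ge 0$, which is the assertion; consequently the It\^o correction in \eqref{sn} is $\le 0$.

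For intuition I would add the equivalent formulation: $q_{ij}$ and $s_{ij}:=r_j p_i$ are two nonnegative arrays with the same total mass $\expc{B}$, so $f_t=\sum_{ij}q_{ij}\log(q_{ij}/s_{ij})$ is a relative entropy and is nonnegative by Gibbs' inequality; in the trace-preserving case $\expc{B}=1$ it is literally the mutual information of the joint distribution $(q_{ij})$ relative to its marginals $(r_j)$ and $(p_i)$. I do not expect a real obstacle in this argument: the only points requiring care are the bookkeeping of the degenerate $r_j=0$ (equivalently $q_{ij}=0$) terms, and noticing that the hypothesis $\sum_j L_j^\dagger L_j=B$ is used only to identify the common total mass of $(q_{ij})$ and $(s_{ij})$ — so that the log-sum bound collapses to $\log 1$ — and not through any further property of $B$.
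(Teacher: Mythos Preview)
Your argument is correct and is essentially the same as the paper's: the paper defines an auxiliary random variable $X$ taking the values $x_{ij}=\expc{L_j^\dagger P_iL_j}_t/(\expc{L_j^\dagger L_j}_t\expc{P_i}_t)$ with probabilities proportional to $\expc{L_j^\dagger L_j}_t\expc{P_i}_t$ and applies Jensen's inequality for $\phi(\xi)=\xi\log\xi$, which is exactly the log-sum/relative-entropy inequality you invoke. Your treatment of the degenerate indices ($r_j=0\Rightarrow q_{ij}=0$) is a welcome bit of extra care that the paper leaves implicit.
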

\begin{proof}
Fix a time $t$ and realization for $\ket{\psi_t}$. Define a discrete random variable $X$ taking values 
$x_{ij}=\dfrac{\expc{L_j^\dagger P_iL_j}_t}{\expc{L_j^\dagger L_j}_t\expc{P_i}_t}$ with probabilities 
$p_{ij}=\dfrac{1}{\expc{\,B\,}}\expc{L_j^\dagger L_j}_t\expc{P_i}_t$. Note that $B$ has positive expectation by definition. Let $EX=\sum_{ij}x_{ij}p_{ij}$ and $\phi(\xi)=\xi\log \xi$. As $\phi(\xi)$ is a convex function, Jensen's inequality implies that
$$E\phi(X)\ge \phi(EX)$$
We find that $EX=\dfrac{1}{\expc{\,B\,}}\sum_{ij}\expc{L_j^\dagger P_i L_j}_t=1$, and hence $\phi(EX)=0$. Thus
$$E\phi(X)=f_t\ge 0$$
\end{proof}
The fact that entropy of the unravelings is lower than the entropy $S^A$ calculated from the Lindbladian evolution allows for the possibility that the two evolutions can be distinguished experimentally.
One can imagine heating a quantum system close to absolute zero and quantifying the amount of heat absorbed. The ability to absorb heat is reflected in the entropy, and is lower in localized systems.
\section{Conclusion}
It seems natural that the process of localization should decrease the accessible states of the system but localization can also increase the energy. This can result in a situation where the entropy is decreasing but the energy is increasing, a kind of `negative temperature' scenario for the trajectory-wise entropy. The von Neumann entropy may increase in accordance with the dynamics prescribed by the master equation, and in the unraveling picture this corresponds to a situation where one has an increasingly broadly distributed classical ensemble of localized wavefunctions.
\par
On the other hand one can argue that predictions from the trajectory-wise picture should not be experimentally detectable since we live on a single trajectory according to stochastic collapse models. This means that we cannot measure the trajectory-wide variance and thus the variance cannot be used to transmit signals faster than the speed of light. In practice, it is often assumed that an identically prepared system is identical and so numerous realizations can be generated by simply preparing the same system many times, but this might be too simplistic an idea, and since the alternative leads to possible faster-than-light signaling it should be taken seriously. Faster-than-light signaling would require some choice to made which can be communicated through a channel, thus the complication is avoided if we take one unravelling, the Poisson one, to be the underlying unraveling with the Wiener unravelling appearing through choice of measurement apparatus, a choice which cannot be used to signal since it will only have local influence. 
\par
Either way the use of the trajectory model in stochastic collapse theories forces us to reckon with the It\^o corrections calculated above.

\bibliography{entropy_240205}
\end{document}